\documentclass[preprint]{elsarticle}

\usepackage{array,xspace,multirow,hhline,tikz,colortbl,tabularx,booktabs,fixltx2e,amsmath,amssymb,amsfonts,amsthm}
\usepackage{algorithm}
\usepackage{algorithmic}
\usepackage{verbatim,ifthen}
\usepackage{enumitem}
\usepackage{pifont}
\usepackage{ifthen}
\usepackage{calrsfs,mathrsfs}
\usepackage{bbding,pifont}
\usepackage{pgflibraryshapes}

\definecolor{light-gray}{gray}{0.9}

\bibliographystyle{elsarticle-harv}

\newtheorem{definition}{Definition}%


	\newcommand{\lemref}[1]{Lemma~\ref{#1}}


	\newcommand{\ie}{i.e.,\xspace}
	\newcommand{\eg}{e.g.,\xspace}

	\newcommand{\Eg}{E.g.,\xspace}

	\newtheorem{lemma}{Lemma}%
	\newtheorem{theorem}{Theorem}%
	\newtheorem{corollary}{Corollary}%
	\newtheorem{example}{Example}




	\newcommand\eat[1]{}

	\usepackage{enumitem}
	\setenumerate[1]{label=\rm(\it{\roman{*}}\rm),ref=({\it\roman{*}}),leftmargin=*}
	\newlength{\wordlength}

	\newcommand{\set}[1]{\{#1\}}
	\newcommand{\midd}{\mathbin{:}}

	\newcommand{\eqclass}[2][]{\ifthenelse{\equal{#1}{}}{[#2]}{[#2]_{\sim_{#1}}}}



	 \newcommand{\listsize}{\ell\xspace}
	 	\newcommand{\setsize}{u\xspace}
	\newcommand{\s}{s\xspace}
	
	\newcommand{\Pref}[1][]{
		\ifthenelse{\equal{#1}{}}{\mathrel R}{\mathop{R_{#1}}}
	}                                          
	\newcommand{\sPref}[1][]{                  
		\ifthenelse{\equal{#1}{}}{\mathrel P}{\mathop{P_{#1}}}
	}                                          
	\newcommand{\Indiff}[1][]{                 
		\ifthenelse{\equal{#1}{}}{\mathrel I}{\mathop{I_{#1}}}
	}
	\newcommand{\prefset}[1][]{\ifthenelse{\equal{#1}{}}{\mathcal{R}}{\mathcal{R}_{#1}}}

	

\usepackage{enumitem}
\setenumerate[1]{label=\rm(\it{\roman{*}}\rm),ref=({\it\roman{*}}),leftmargin=*}




\newcommand{\nbh}[1][]{
	\ifthenelse{\equal{#1}{}}{\nu}{\nu(#1)}
}

\newcommand{\cstr}[1][]{
	\ifthenelse{\equal{#1}{}}{\mathscr S}{\cstr(#1)}
}

\newcommand{\choice}[1][]{
	\ifthenelse{\equal{#1}{}}{\mathit{C}}{\choice(#1)}
}

\sloppy

\begin{document}

	\title{Testing Substitutability of Weak Preferences}

	\author{Haris Aziz\corref{cor1}} \ead{aziz@in.tum.de}
	\author{Markus Brill} \ead{brill@in.tum.de}
	\author{Paul Harrenstein} \ead{harrenst@in.tum.de}

	\address{Institut f\"ur Informatik, Technische Universit\"at M\"unchen, 85748 Garching bei M\"unchen, Germany} 
	
	\cortext[cor1]{Corresponding author} 


	\begin{abstract}
	In many-to-many matching models, substitutable preferences constitute the largest domain for which a pairwise stable matching is guaranteed to exist. 
In this note, we extend the recently proposed algorithm of \citet{HIK11a} to test substitutability of weak preferences. 
	Interestingly, the algorithm is faster than the algorithm of \citeauthor{HIK11a} by a linear factor on the domain of strict preferences.
\end{abstract}

	\begin{keyword}
	 	Substitutability
		\sep Many-to-Many Matchings
		\sep Computational Complexity 
		\sep and Preference Elicitation.
		 \\
		
		\emph{JEL}: C62, C63, and C78
	\end{keyword}

\maketitle

\section{Introduction}

In stable matching problems, the aim is to match agents in a stable manner to objects or to other agents, keeping in view the preferences of the agents involved. These problems have significant applications in matching residents to hospitals, students to schools, etc. and have received tremendous interest in mathematical economics, computer science, and operations research \citep[see \eg ][]{GuIr89a,RoSo90a}.

In many matching models individual preferences are supposed to be \emph{responsive}.  \Eg in the case in which a hospital can hire multiple doctors, the hospitals are usually and unnaturally assumed to submit 
preferences that render the choice between a pair of doctors independent of other available outcomes~\citep{HIK11a}. This assumption is rather unnatural, in particular if multiple agents can be matched to a single agent. An alternative is to allow hospitals to submit \emph{substitutable} preferences, which allows for considerably more flexibility in expressing preferences over groups of doctors.


Substitutable preferences were introduced by \citet{Roth84a} and constitute the largest domain in which stable matchings are guaranteed to exist. In many matching models, substitutability is in fact a necessary and sufficient condition for the existence of stable allocations (see footnote 4 in \citep{HIK11a}).\footnote{The settings include many-to-one matchings, many-to-many matchings, and  many-to-many matching with contracts.} The significance of substitutability leads to the natural algorithmic problem of testing  whether a given preference relation is substitutable or not. Recently, \citet{HIK11a} presented a polynomial-time algorithm for this problem.
%
Both the original definition of substitutability and the testing algorithm assume that agents express \emph{strict} preferences, \ie preferences without any indifferences.
Strict preferences are assumed for most of the results in the literature concerning substitutability. 



\citet{Soto99a} formulated a natural extension of substitutability for the more general preference domain which allows indifferences (so-called \emph{weak} preferences). 
In many settings, allowing indifferences is not only a natural relaxation but also also a practical requirement: as agents may not be able to strictly rank the respective outcomes and might be indifferent among some of them. 
The introduction of indifferences can significantly change the properties and structure of stable matchings. For example, stable matchings can have different cardinalities~\citep{MII+02a} and man or woman-optimal stable matchings are no longer well-defined for marriage markets~\citep{RoSo90a}. Weak preferences can also induce non-trivial complexity. For instance, checking whether a stable roommate matching exists is polynomial-time solvable for strict preferences~ \citep{Irvi85a} but becomes NP-complete when indifferences are allowed~\citep{Ronn90a}.


In this brief note, we examine the notion of substitutability for the general case of weak preferences.\footnote{Despite the fact that indifferences can significantly affect results in matching theory, \citet{Soto99a} showed that this generalized notion still guarantees the existence of a stable matching in many-to-many matching models.}
We identify conditions that are violated by non-substitutable preferences. Using these conditions, we obtain a polynomial-time algorithm to test substitutability of weak preferences. Restricted to the domain of strict preferences, our algorithm is faster than the algorithm of \citet{HIK11a} by a linear factor.


\section{Preliminaries}

Let $U$ be a finite set of alternatives. A (weak) preference relation $R$ is a transitive and complete relation on $2^U$. Let~$P$ and~$I$ denote the strict and symmetric parts of $R$, respectively. Each preference relation $R$ induces a choice function $C$ that returns, for each $X \subseteq U$, the set of all $R$-maximal subsets of $X$, \ie
\[
	\choice[X]=\set{Y\subseteq X\midd \text{$Y \mathrel{R} Z$ for all $Z\subseteq X$}}.
\]
A set $X\subseteq U$ is called \emph{acceptable} if $X \mathrel{R} \emptyset$. Observe that $C$ always returns at least one set (maybe the empty set) and that all sets returned by~$C$ are acceptable. 

%

The most general and expressive way of representing $R$ is via a preference list $L$ that contains all acceptable sets. This list representation is reminiscent to the \emph{representation by individually rational lists of coalitions} used in the context of hedonic coalition formation games~\citep{Ball04a}.

Let $\s$ denote the maximal size of an indifference class, where an indifference class is a family $\set{Y\in U\midd X\mathrel{I} Y}$ for some acceptable subset~$X$ in~$U$. Observe that the size of $C(\cdot)$ is bounded by $s$ and that a preference relation is \emph{strict} if and only if $s=1$. Furthermore, let~$\setsize$ denote the size of $U$ and $\listsize=|L|$ the number of acceptable sets.



\begin{example}\label{example:basic}
	%
	%
	
	Let $U=\{a,b,c,d\}$ and define the preference relation $R$ by the list
		$$\{a,b,d\}~~I~~\{b,c,d\}~~P~~\{a,b\}~~I~~\{b,c\}~~I~~\{a,c\}~~P~~\emptyset \text.$$
		
		Then, $C(U)=\{\{a,b,d\},\{b,c,d\}\}$ and $C(\{a\})=\{\emptyset\}$.
\end{example}

For a preference relation $R$ represented in list form and $X \subseteq U$, it can be checked in time $O(\listsize|X|)$ whether a given alternative is in $C(X)$.

\section{Substitutability and weak preferences}

The following definition was introduced by \citet{Soto99a}.

\begin{definition}
A preference relation~$R$ is \emph{substitutable} if and only if the following two conditions hold:
\begin{enumerate}
	\item[(S1)] for all non-empty $A,B\subseteq U$ with $B\subseteq A$ we have that for all $X\in\choice[A]$ there is some $Y\in\choice[B]$ such that $X\cap B\subseteq Y$, and
	\item[(S2)] for all non-empty $A,B\subseteq U$ with $B\subseteq A$ we have that for all $Y\in \choice[B]$ there is some $X\in\choice[A]$ such that $X\cap B\subseteq Y$.
\end{enumerate}	
\end{definition}

\begin{example}
	%
	
	Consider the preference relation $R$ from Example~\ref{example:basic}. It can be verified that $R$ satisfies (S1) and violates (S2). For the latter, take $A=U$ and $B=\{a,b,c\}$. Then, 
	$$C(B)=\{\{a,b\}, \{b,c\}, \{a,c\}\}.$$
Now $Y=\{a,c\}$ is in $\choice[B]$, but there exists no $X\in \choice[A]$ such that $X\cap B\subseteq Y$. Hence, $R$ is not substitutable.
\end{example}

The following lemma is adapted from Lemma~1 in \citep{HIK11a}.

\begin{lemma}\label{lemma:setalphahat}
	For all $A,B\subseteq U$ with $B\subseteq A$,
	\[
		\text{$\choice[A]\cap 2^B\neq\emptyset$ implies $\choice[B]=\choice[A]\cap 2^B$.}
	\]
\end{lemma}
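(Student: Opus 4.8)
The plan is to prove the stated equality of sets by establishing the two inclusions separately, relying on only two facts: that any two members of $\choice[X]$ are necessarily $R$-indifferent (each being $R$-maximal in $2^X$), and the transitivity of $R$.

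First I would dispose of the easy inclusion $\choice[A]\cap 2^B\subseteq\choice[B]$, which in fact holds regardless of the nonemptiness hypothesis. If $X\in\choice[A]$ and $X\subseteq B$, then $X\mathrel{R} Z$ holds for every $Z\subseteq A$, hence \emph{a fortiori} for every $Z\subseteq B$; since also $X\subseteq B$, this is exactly the assertion $X\in\choice[B]$.

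The reverse inclusion $\choice[B]\subseteq\choice[A]\cap 2^B$ is where the hypothesis $\choice[A]\cap 2^B\neq\emptyset$ is used, and this is the step I expect to require the most care. I would fix a witness $W\in\choice[A]\cap 2^B$ supplied by the hypothesis; by the inclusion just established, $W\in\choice[B]$ as well, so $W$ serves as an ``anchor'' linking maximality over $2^B$ to maximality over $2^A$. Now take an arbitrary $Y\in\choice[B]$. Because $Y$ and $W$ are both $R$-maximal in $2^B$, we have $Y\mathrel{R} W$ and $W\mathrel{R} Y$, \ie $Y\mathrel{I} W$. Since $W\in\choice[A]$ gives $W\mathrel{R} Z$ for all $Z\subseteq A$, transitivity of $R$ together with $Y\mathrel{I} W$ yields $Y\mathrel{R} Z$ for all $Z\subseteq A$; as $Y\subseteq B\subseteq A$, this means $Y\in\choice[A]$, and with $Y\subseteq B$ we conclude $Y\in\choice[A]\cap 2^B$.

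The two inclusions together give $\choice[B]=\choice[A]\cap 2^B$. The only genuinely delicate point is the middle step of the second inclusion: one must notice that the nonemptiness hypothesis does more than guarantee a single maximal set inside $B$---it guarantees that this set is simultaneously optimal over all of $2^A$, so that every $B$-optimal set, being indifferent to it, inherits $A$-optimality by transitivity. Without such a witness there would be no reason for a maximal subset of $B$ to be maximal in the larger domain, which is precisely why the hypothesis cannot be dropped.
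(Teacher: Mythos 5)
Your proof is correct and follows essentially the same route as the paper's: fix a witness in $\choice[A]\cap 2^B$ and use transitivity of $R$ to transfer maximality between $2^A$ and $2^B$. The only (cosmetic) difference is that you establish the inclusion $\choice[A]\cap 2^B\subseteq\choice[B]$ directly and note it needs no hypothesis, whereas the paper argues the same inclusion contrapositively via $Y\notin\choice[B]$.
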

\begin{proof}
	Assume $\choice[A]\cap 2^B\neq\emptyset$. Then, $X\in\choice[A]\cap 2^B$ for some $X\subseteq U$. First consider an arbitrary $Y\in\choice[B]$. Then $Y\mathrel R X$. Hence, $Y\in\choice[A]\cap 2^B$ as well. Now consider an arbitrary $Y\notin \choice[B]$. If $Y\notin 2^B$, immediately $Y\notin \choice[A]\cap 2^B$. If $Y\in 2^B$, we have $X\mathrel P Y$ and therefore $Y\notin\choice[A]$. Also then $Y\notin \choice[A]\cap 2^B$.
\end{proof}

\section{Testing substitutability}

We now outline a way to test substitutability of weak preferences. The idea utilizes an insight from \citep{HIK11a} that instead of checking all violations of substitutability, one may restrict one's attention to violations of a specific type. 

By an \emph{(S1)-violation for~$R$} we understand a pair $(A,B)\in 2^U\times 2^U$ such that $B\subseteq A$ and for some $X\in\choice[A]$, it is the case that $X\cap B\nsubseteq Z$ for all $Z\in\choice[B]$.

\begin{lemma}\label{lemma:S1check}
		Let $R$ be a preference relation. If there exists an (S1)-violation for~$R$, then there exist $X,Y\in L$ and $x\in X$ such that $(X\cup Y,Y\cup\set x)$ is also an (S1)-violation for~$R$.
\end{lemma}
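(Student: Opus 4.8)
The plan is to start from an arbitrary (S1)-violation $(A,B)$ with a witness $X^{*}\in\choice[A]$ (so that $X^{*}\cap B\nsubseteq Z$ for every $Z\in\choice[B]$) and to build the required special-form violation \emph{directly}, without any minimization or descent argument. Since every set returned by $C$ is acceptable, both $X^{*}$ and any $R$-maximal subset of $B$ lie in $L$. I would therefore look for a suitable $Y\in\choice[B]$ and $x\in X^{*}$, and then verify that the pair $(A',B')=(X^{*}\cup Y,\,Y\cup\set x)$ is itself an (S1)-violation; note that $x\in X^{*}$ already guarantees $B'\subseteq A'$, so this pair has exactly the shape demanded by the statement, with $X=X^{*}$.

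Two structural facts drive the verification, and I would establish them first. Since $A'=X^{*}\cup Y\subseteq A$ and $X^{*}$ is $R$-maximal in $A$, it is a fortiori $R$-maximal in the smaller set $A'$, so $X^{*}\in\choice[A']$ and can serve as the witness for the new pair. Likewise, since $Y\in\choice[B]$ and $Y\subseteq B'\subseteq B$, \lemref{lemma:setalphahat} applies and gives $\choice[B']=\choice[B]\cap 2^{B'}$; in particular every $Z\in\choice[B']$ is already a member of $\choice[B]$. Choosing $x\in(X^{*}\cap B)\setminus Y$—possible precisely because $(A,B)$ is a violation and hence $X^{*}\cap B\nsubseteq Y$—it then remains to show that $X^{*}\cap B'=(X^{*}\cap Y)\cup\set x\nsubseteq Z$ for every $Z\in\choice[B']$.

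This last verification splits into two cases, and the second is the crux. If $x\notin Z$ we are done at once, since $x\in X^{*}\cap B'$. The difficulty is the case $x\in Z$, where one must still exhibit an element of $X^{*}\cap Y$ lying outside $Z$. The key idea—and the step I expect to carry the whole argument—is to pick $Y$ not arbitrarily but as a member of $\choice[B]$ \emph{maximizing} $|X^{*}\cap Y|$. With this choice, $X^{*}\cap B'\subseteq Z$ would force $(X^{*}\cap Y)\cup\set x\subseteq Z$, whence $X^{*}\cap Z\supseteq(X^{*}\cap Y)\cup\set x$ has strictly more elements than $X^{*}\cap Y$ (as $x\notin Y$); since $Z\in\choice[B]$ by the identity $\choice[B']=\choice[B]\cap 2^{B'}$, this contradicts the maximality of $|X^{*}\cap Y|$. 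Hence no such $Z$ exists, $(A',B')=(X^{*}\cup Y,\,Y\cup\set x)$ is an (S1)-violation of the required form, and the proof is complete. Everything except the maximal-overlap choice of $Y$ should be routine bookkeeping; that choice is exactly what neutralizes the troublesome case $x\in Z$.
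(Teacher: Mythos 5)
Your proof is correct and follows essentially the same route as the paper: the decisive step in both is to pick $Y\in\choice[B]$ with maximal overlap $X\cap Y$ (you maximize cardinality, the paper takes an inclusion-maximal element, which is equivalent for this purpose) and to use $\choice[Y\cup\set x]=\choice[B]\cap 2^{Y\cup\set x}$ from \lemref{lemma:setalphahat} to derive the contradiction. Your only other deviation is cosmetic: you show $X^{*}\in\choice[X^{*}\cup Y]$ directly from maximality over the smaller family rather than via a second application of \lemref{lemma:setalphahat}.
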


\begin{proof}
	Assume that $(A,B)$ is an (S1)-violation for $R$. Then there is some  $X\in\choice[A]$ such that $ X\cap B\nsubseteq Z$ for all $Z\in\choice[B]$. As $\choice[B]\neq\emptyset$, there is some $Y\in\choice[B]$
such that $Y\cap X$ is maximal with respect to set inclusion, \ie $Y\cap X\subsetneq Z\cap X$ for no $Z\in\choice[B]$. Obviously, $X,Y \in L$. By our assumption, $X\cap B\nsubseteq Y$ and we may therefore assume the existence of some~$x\in X\setminus Y$.  We prove that $(X\cup Y,Y\cup\set x)$ is an (S1)-violation for~$R$, \ie
\begin{enumerate}
	\item\label{item:S1checki}   $Y\cup\set x\subseteq X\cup Y$,
	\item\label{item:S1checkii}  $X\in\choice[X\cup Y]$, and
	\item\label{item:S1checkiii} $X\cap(Y\cup\set x)\nsubseteq Z$ for all $Z\in\choice[Y\cup\set x]$.
\end{enumerate}	

As $x \in X$, it is obvious that~\ref{item:S1checki} holds. 	
As for~\ref{item:S1checkii}, observe that $X\in\choice[A]\cap 2^{X\cup Y}$. \lemref{lemma:setalphahat} implies $\choice[X\cup Y]=\choice[A]\cap 2^{X\cup Y}$
and thus $X\in\choice[X\cup Y]$.

Finally, consider an arbitrary $Z\in\choice[Y\cup\set x]$. Observe that $Y\in\choice[B]\cap 2^{Y\cup\set x}$. By another application of \lemref{lemma:setalphahat}, we get $\choice[Y\cup\set x]=\choice[B]\cap 2^{Y\cup\set x}$ and, therefore,
$Z\in\choice[B]$. Moreover, by choice of~$Y$, there is some $z\in X\cap (Y\cup\set x)$ such that $z\notin Z$. Hence, $X\cap(Y\cup\set x)\nsubseteq Z$, which proves~\ref{item:S1checkiii}. 
\end{proof}

By an \emph{(S2)-violation for~$R$} we understand a pair $(A,B)\in 2^U\times 2^U$ such that $B\subseteq A$ and for some $Y\in\choice[B]$, it is the case that $Z\cap B\nsubseteq Y$ for all $Z\in\choice[A]$.

\begin{lemma}\label{lemma:S2check}
		Let $R$ be a preference relation. If there exists an (S2)-violation for~$R$, then there exist $X,Y\in L$ and $x\in X$ such that $(X\cup Y,Y\cup\set x)$ is also an (S2)-violation for~$R$.
\end{lemma}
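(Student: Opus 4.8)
The plan is to mirror the proof of \lemref{lemma:S1check}, dualising every step. Starting from an (S2)-violation $(A,B)$, I fix a witness $Y\in\choice[B]$ with $Z\cap B\nsubseteq Y$ for all $Z\in\choice[A]$, and manufacture a small violation of the shape $(X\cup Y,Y\cup\set x)$. The roles swap relative to the (S1) case: now $Y$ itself plays the part of the distinguished element of the \emph{smaller} set, while $X\in\choice[A]$ supplies the larger set $X\cup Y$ together with the extra element $x\in(X\cap B)\setminus Y$, which is non-empty precisely because $(A,B)$ is a violation. Concretely, I must establish the three dual conditions: $Y\cup\set x\subseteq X\cup Y$ (immediate from $x\in X$); $Y\in\choice[Y\cup\set x]$; and $Z\cap(Y\cup\set x)\nsubseteq Y$ for every $Z\in\choice[X\cup Y]$.

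The middle condition is routine: since $x\in B$ we have $Y\cup\set x\subseteq B$ and $Y\in\choice[B]\cap 2^{Y\cup\set x}$, so \lemref{lemma:setalphahat} gives $\choice[Y\cup\set x]=\choice[B]\cap 2^{Y\cup\set x}\ni Y$. Likewise, because $X\cup Y\subseteq A$ and $X\in\choice[A]\cap 2^{X\cup Y}$, \lemref{lemma:setalphahat} yields $\choice[X\cup Y]=\choice[A]\cap 2^{X\cup Y}$, so every competitor $Z$ in the third condition is a member of $\choice[A]$ contained in $X\cup Y$. Since the only element of $Y\cup\set x$ lying outside $Y$ is $x$ itself, the third condition is equivalent to the single statement that $x\in Z$ for every $Z\in\choice[X\cup Y]$.

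This last reduction is where the argument departs from the (S1) proof and is the main obstacle. In \lemref{lemma:S1check} each competitor $Z$ could be refuted by its \emph{own} witness element drawn from $X\cap(Y\cup\set x)$, so maximality of $Y\cap X$ sufficed; here a \emph{single} $x$ must lie in all competitors simultaneously, and one checks that the naive dual choice (taking $X$ with $X\cap Y$ maximal) does not force this, since two maximal sets can use disjoint elements of $(X\cap B)\setminus Y$. The fix is to choose $X\in\choice[A]$ so that $X\cup Y$ is minimal with respect to set inclusion among $\set{Z\cup Y : Z\in\choice[A]}$. I then claim every $Z\in\choice[X\cup Y]=\choice[A]\cap 2^{X\cup Y}$ satisfies $Z\cup Y=X\cup Y$: indeed $Z\cup Y\subseteq X\cup Y$, and a proper containment would contradict minimality since $Z\in\choice[A]$. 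Hence $Z\setminus Y=X\setminus Y\subseteq Z$, and any $x\in(X\cap B)\setminus Y\subseteq X\setminus Y$ therefore lies in this common difference, giving $x\in Z$ as required.

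It remains only to record that $X,Y\in L$ because every set returned by $\choice$ is acceptable, and that the inclusions invoked in the applications of \lemref{lemma:setalphahat} ($X\subseteq A$, $Y\subseteq B\subseteq A$, and $x\in B$) all hold by construction. Assembling the three verified conditions shows that $(X\cup Y,Y\cup\set x)$ is an (S2)-violation of the required form, which completes the proof.
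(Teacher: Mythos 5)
Your proof is correct and follows essentially the same route as the paper's: the same choice of $X\in\choice[A]$ making $X\cup Y$ inclusion-minimal, the same two applications of \lemref{lemma:setalphahat}, and the same observation that minimality forces $Z\setminus Y=X\setminus Y$ and hence $x\in Z$ for every competitor. Your explicit insistence that $x\in(X\cap B)\setminus Y$ (rather than merely $x\in X\setminus Y$) is a small but welcome sharpening, since $x\in B$ is what licenses the application of \lemref{lemma:setalphahat} to the pair $(B,Y\cup\set x)$.
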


\begin{proof}
	Assume that $(A,B)$ is an (S2)-violation for $R$. 
	Then there is some $Y\in\choice[B]$ such that $Z\cap B\nsubseteq Y$ for all $Z\in\choice[A]$. As $\choice[A]\neq\emptyset$, there is some $X\in\choice[A]$ such that $X\cup Y$ is minimal with respect to set-inclusion, \ie $Z\cup Y\subsetneq X\cup Y$ for no $Z\in\choice[A]$. Obviously, $X,Y \in L$. By our assumption, $X\cap B\nsubseteq Y$ and we may assume the existence of some~$x\in X\setminus Y$.  We prove that $(X\cup Y,Y\cup\set x)$ is also an (S2)-violation for $R$, \ie
	\begin{enumerate}
		\item\label{item:S2checki} $Y\cup\set x\subseteq X\cup Y$,
		\item\label{item:S2checkii} $Y\in\choice[Y\cup\set x]$, and
		\item\label{item:S2checkiii} $Z\cap (Y\cup\set x)\nsubseteq Y$ for all $Z\in\choice[X\cup Y]$.
	\end{enumerate}
	
As $x \in X$, \ref{item:S2checki}  obviously holds. As for~\ref{item:S2checkii}, observe that~$Y\in\choice[B]\cap 2^{Y\cup\set x}$. \lemref{lemma:setalphahat} implies that  $\choice[Y\cup\set x]=\choice[B]\cap 2^{Y\cup\set x}$ and thus
$Y\in\choice[Y\cup\set x]$. 

Finally, consider an arbitrary $Z\in\choice[X\cup Y]$. Observe that 
$X\in\choice[A]\cap 2^{X\cup Y}$. Another application of \lemref{lemma:setalphahat} yields $\choice[X\cup Y]=\choice[A]\cap 2^{X\cup Y}$ and, therefore,
$Z\in\choice[A]\cap 2^{X\cup Y}$. Moreover, by choice of~$X$, we have $Z\setminus Y=X\setminus Y$ for all $Z\in\choice[A]\cap 2^{X\cup Y}$. In particular, $x\in Z$. Since $x\notin Y$, we obtain $Z\cap(Y\cup\set x)\nsubseteq Y$, which proves~\ref{item:S2checkiii}.
\end{proof}

We can utilize Lemmas~\ref{lemma:S1check} and \ref{lemma:S2check} to obtain a polynomial-time algorithm to check the substitutability of a preference relation.

\begin{theorem}
	It can be checked in time $O(\listsize^2\setsize^2 (\listsize + \s^2))$ whether a given preference relation is substitutable.
\end{theorem}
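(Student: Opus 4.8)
The plan is to convert Lemmas~\ref{lemma:S1check} and~\ref{lemma:S2check} into a search over a polynomially bounded family of candidate pairs. By \lemref{lemma:S1check}, $R$ admits an (S1)-violation if and only if it admits one of the special form $(X\cup Y,Y\cup\set x)$ with $X,Y\in L$ and $x\in X$: the ``only if'' direction is exactly the content of the lemma, and the ``if'' direction is immediate, since such a pair is by definition an (S1)-violation. The symmetric equivalence for (S2)-violations follows from \lemref{lemma:S2check}. Hence $R$ is substitutable if and only if \emph{none} of the pairs $(X\cup Y,Y\cup\set x)$, ranging over $X,Y\in L$ and $x\in X$, is an (S1)- or an (S2)-violation. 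There are $O(\listsize^2\setsize)$ such pairs, so it suffices to show that each one can be tested in time $O(\setsize(\listsize+\s^2))$.

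First I would explain how to evaluate the choice function on an arbitrary $V\subseteq U$, since the auxiliary sets $X\cup Y$ and $Y\cup\set x$ need not themselves lie in $L$. As $L$ represents $R$, we may assume its acceptable sets are grouped into indifference classes ordered by preference (sorting them once in a preprocessing step if necessary). A single scan of $L$ then identifies the most preferred indifference class that contains a subset of $V$; the members of that class contained in $V$ constitute $\choice[V]$, and if no acceptable set is contained in $V$ then $\choice[V]=\set\emptyset$. Each subset test costs $O(\setsize)$, so computing $\choice[V]$ takes $O(\listsize\setsize)$ time, and by the definition of $\s$ the output satisfies $|\choice[V]|\le\s$.

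Given a candidate pair $(A',B')=(X\cup Y,Y\cup\set x)$, the algorithm computes $\choice[A']$ and $\choice[B']$ in $O(\listsize\setsize)$ time each. To decide whether $(A',B')$ is an (S1)-violation, it checks for every $X'\in\choice[A']$ whether $X'\cap B'\subseteq Z$ holds for some $Z\in\choice[B']$; the pair is an (S1)-violation precisely when some $X'$ fails this test. Since both choice sets have at most $\s$ members and each inclusion test costs $O(\setsize)$, this comparison runs in $O(\s^2\setsize)$ time, and the analogous (S2)-check costs the same. The per-pair cost is therefore $O(\listsize\setsize+\s^2\setsize)=O(\setsize(\listsize+\s^2))$, and multiplying by the $O(\listsize^2\setsize)$ candidate pairs yields the claimed bound $O(\listsize^2\setsize^2(\listsize+\s^2))$.

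The main obstacle is not the correctness of the reduction, which is delivered by the two lemmas, but the careful cost accounting. One must verify that the choice function can be evaluated on the auxiliary sets within the stated budget, and that bounding the sizes of the choice sets by $\s$ is exactly what prevents the pairwise comparison of maximal sets from dominating the running time.
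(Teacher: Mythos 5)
Your proposal is correct and follows essentially the same route as the paper: both reduce the problem via Lemmas~\ref{lemma:S1check} and~\ref{lemma:S2check} to checking the $O(\listsize^2\setsize)$ candidate pairs $(X\cup Y, Y\cup\set{x})$, and both account for a per-pair cost of $O(\listsize\setsize + \s^2\setsize)$ from computing the two choice sets and comparing their at most $\s$ members pairwise. Your added detail on how to evaluate $\choice[V]$ for sets $V$ not in $L$ is a welcome elaboration of a step the paper leaves implicit.
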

\begin{proof}
	To test substitutability, we need to check whether both (S1) and (S2) hold. This is equivalent to verifying that neither an (S1)-violation nor an (S2)-violation exists. 
	
	The algorithm works as follows. Instead of checking all possible violations of (S1) and (S2), we can use Lemmas~\ref{lemma:S1check} and \ref{lemma:S2check} to restrict our attention to only certain types of possible violations. The algorithm exhaustively checks all these types of possible violations.

Let us first consider the case of (S1).	To check (S1), we know from Lemma~\ref{lemma:S1check}, that we can restrict our attention to violations of the form $(X\cup Y,Y\cup\set x)$ for some $X,Y\in L$ and $x\in X$. Therefore, the maximum number of pairs we need to check is upper-bounded by ${\listsize\choose 2}u$. 
	
Verifying an (S1)-violation of type $(A,B)=(X\cup Y,Y\cup\set x)$ requires us to do the following:
	
	\begin{itemize}
	\item compute $C(A)$ which takes time $O(\listsize\setsize)$, 
	\item compute $C(B)$ which takes time $O(\listsize\setsize)$, and
	\item test the main condition: for all $X\in\choice[A]$ there is some $Y\in\choice[B]$ such that $X\cap B\subseteq Y$. Testing the condition takes time $O(\s^2 \setsize)$.
\end{itemize}
	
	Therefore, verifying a violation of type $(A,B)=(X\cup Y,Y\cup\set x)$ takes time 
	$$O(\listsize\setsize)+O(\listsize\setsize)+O(\s^2 \setsize)= O(\listsize\setsize + \s^2{\setsize}).$$
	
	The time needed to check whether an (S1)-violation exists is then equal to the maximum number of pairs we need to check multiplied by the time required to verify one (S1)-violation which equals
	$$O({\listsize\choose 2}  \setsize)  \times O(\listsize\setsize + \s^2{\setsize})=O(\listsize^2\setsize(\listsize\setsize + \s^2{\setsize})).$$


The same analysis holds for checking whether an (S2)-violation exists. Therefore there exists an algorithm which runs in time $O(2\listsize^2\setsize (\listsize\setsize + \s^2{\setsize}))=O(\listsize^2\setsize (\listsize\setsize + \s^2{\setsize}))=O(\listsize^2\setsize^2 (\listsize + \s^2))$ and tests the substitutability of a preference relation. 
\end{proof}

By letting $s=1$, we get the following.

\begin{corollary}
		It can be checked in time $O(\listsize^3\setsize^2)$ whether a given strict preference relation is substitutable.
\end{corollary}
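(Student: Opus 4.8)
The plan is to test substitutability by verifying separately that neither an (S1)-violation nor an (S2)-violation exists; by the definition, these two conditions together are exactly substitutability. A naive search over all pairs $(A,B)$ with $B\subseteq A$ would be exponential, so the crucial saving comes from \lemref{lemma:S1check} and \lemref{lemma:S2check}: if any (S1)-violation (resp.\ (S2)-violation) exists at all, then one of the restricted form $(X\cup Y,Y\cup\set x)$ with $X,Y\in L$ and $x\in X$ already exists. It therefore suffices to enumerate all candidate pairs of this shape and test each one directly, declaring $R$ non-substitutable if and only if some candidate turns out to be a genuine violation.

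First I would bound the number of candidates: each is determined by an unordered pair of acceptable sets $X,Y\in L$ together with one element $x\in X$, giving at most ${\listsize\choose 2}\setsize=O(\listsize^2\setsize)$ of them. Next I would account for the cost of testing a single candidate $(A,B)=(X\cup Y,Y\cup\set x)$. Scanning the list $L$ to assemble the most-preferred indifference class among the subsets of $A$ yields $\choice[A]$ in time $O(\listsize\setsize)$, and $\choice[B]$ likewise. It then remains to test the violation condition itself: for (S1), whether for every $X'\in\choice[A]$ there is some $Y'\in\choice[B]$ with $X'\cap B\subseteq Y'$. Since $\choice[A]$ and $\choice[B]$ each contain at most $\s$ members and each containment check between subsets of $U$ costs $O(\setsize)$, this runs in $O(\s^2\setsize)$; the (S2) condition is symmetric. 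Thus one candidate costs $O(\listsize\setsize)+O(\s^2\setsize)=O(\listsize\setsize+\s^2\setsize)$.

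Multiplying the $O(\listsize^2\setsize)$ candidates by the per-candidate cost gives $O(\listsize^2\setsize(\listsize\setsize+\s^2\setsize))=O(\listsize^2\setsize^2(\listsize+\s^2))$ for the (S1) test, and the (S2) test matches this bound, so the two together cost only a constant factor more, yielding the claimed running time. I expect the real content to lie not in correctness — which is delivered wholesale by \lemref{lemma:S1check} and \lemref{lemma:S2check} together with the definition of substitutability — but in the time accounting, and specifically in the verification of a single candidate: it is essential that $\choice[A]$ and $\choice[B]$ can each be extracted from $L$ in time $O(\listsize\setsize)$ and that their cardinalities are bounded by $\s$, so that comparing the two choice sets costs only $O(\s^2\setsize)$ rather than a term growing with $\listsize$. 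Everything else is routine bookkeeping.
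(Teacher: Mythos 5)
Your proposal is correct and follows essentially the same route as the paper: the corollary is obtained by specializing the general bound $O(\listsize^2\setsize^2(\listsize+\s^2))$ from the theorem (whose proof you reproduce faithfully — candidate enumeration via Lemmas~\ref{lemma:S1check} and~\ref{lemma:S2check}, per-candidate cost $O(\listsize\setsize+\s^2\setsize)$). The only step specific to this corollary, which you leave implicit, is that for strict preferences every indifference class is a singleton, so $\s=1$ and the bound collapses to $O(\listsize^3\setsize^2)$.
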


%



\section{Conclusion}

We examined substitutability of preferences which may include indifferences. It was shown that this general notion of substitutability can be tested in time polynomial in the length of the preference relation in list form.  On the domain of strict preferences, the (worst case) asymptotic running time of the algorithm turns out to be slightly faster than the algorithm of \citet{HIK11a} ($O(\listsize^3 \setsize^2)$ as compared to $O(\listsize^3 \setsize^3)$). 
As pointed out by \citet{HIK11a}, \emph{``such an
algorithm could be distributed to market participants for use in the preparation
of their preference relations for submission.''} 

In contrast to other results in matching theory, allowing indifferences does therefore not affect the tractability of testing substitutability. 
It will be interesting to explore the extent to which allowing indifferences affects other recent results concerning substitutability. 

\section*{Acknowledgements}

This material is based on work supported by the Deutsche Forschungsgemeinschaft under grants BR 2312/6-1 (within the European Science Foundation's EUROCORES program LogICCC), BR~2312/7-1, and BR~2312/9-1.


\end{document}